\newtheorem{theorem}{Theorem}
\newtheorem{lemma}{Lemma}
\newtheorem{definition}{Definition}
\newtheorem{remark}{Remark}
\newtheorem{assumption}{Assumption}
\newcommand{\R}{\mathbb{R}}
\definecolor{darkblue}{RGB}{0,0,102}
\definecolor{lightblue}{RGB}{77,77,148}
\definecolor{gold}{RGB}{234, 170, 0}
\definecolor{metallic_gold}{RGB}{139, 111, 78}
\newcommand{\lmat}{\begin{bmatrix}}
\newcommand{\rmat}{\end{bmatrix}}
\def\@linkcolor{blue}
  \def\@anchorcolor{red}
  \def\@citecolor{red}
  \def\@filecolor{red}
  \def\@urlcolor{black}
  \def\@menucolor{red}
  \def\@pagecolor{red}
  \edef\x{%
    \edef\noexpand\x{%
      \endgroup
      \noexpand\toks@{%
        \catcode 96=\noexpand\the\catcode`\noexpand\`\relax
        \catcode 61=\noexpand\the\catcode`\noexpand\=\relax
      }%
    }%
    \noexpand\x
  }%
\newcommand{\Zp}{\mathbb{N}_{0+}}
\newcommand{\Rp}{\mathbb{R}_{0+}}
\def\BibTeX{{\rm B\kern-.05em{\sc i\kern-.025em b}\kern-.08em
    T\kern-.1667em\lower.7ex\hbox{E}\kern-.125emX}}
\newcommand{\fixed@sra}{$\vrule height 2\fontdimen22\textfont2 width 0pt\shortrightarrow$}
\newcommand{\shortarrow}[1]{%
  \mathrel{\text{\rotatebox[origin=c]{\numexpr#1*45}{\fixed@sra}}}
}
\begin{document}

\title{\Large {\bf Multi-rate Control Design under Input Constraints via Fixed-Time Barrier Functions}}

\author{Kunal Garg, \and Ryan K. Cosner, \and Ugo Rosolia,  \and Aaron D. Ames and Dimitra Panagou
\thanks{K. Garg and D. Panagou would like to acknowledge the support of the Air Force Office of Scientific Research under the award number FA9550-17-1-0284 and of the National Science Foundation under the award number 1942907. R. K. Cosner, U. Rosolia, and A. D. Ames would like to acknowledge the support of the National Science Foundation under the award number 1932091. K. Garg and D. Panagou are with the Department of Aerospace Engineering, University of Michigan, Ann Arbor, MI, USA; \texttt{\{kgarg, dpanagou\}@umich.edu}.
 R. K. Cosner, U. Rosolia, and A. D. Ames are with the AMBER
lab at the California Institute of Technology, Pasadena, CA, USA, e-mail: \texttt{\{rkcosner, urosolia, ames\}@caltech.edu}.
}

}
\maketitle

\begin{abstract}
In this paper, we introduce the notion of periodic safety, which requires that the system trajectories periodically visit a subset of a forward-invariant safe set, and utilize it in a multi-rate framework where a high-level planner generates a reference trajectory that is tracked by a low-level controller under input constraints. We introduce the notion of fixed-time barrier functions which is leveraged by the proposed low-level controller in a quadratic programming framework. Then, we design a model predictive control policy for high-level planning with a bound on the rate of change for the reference trajectory to guarantee that periodic safety is achieved. We demonstrate the effectiveness of the proposed strategy on a simulation example, where the proposed fixed-time stabilizing low-level controller shows successful satisfaction of control objectives, whereas an exponentially stabilizing low-level controller fails.
\end{abstract}

\section{Introduction}
% \textbf{TO DO: Statement on motivation for periodic stability:}
Constraints requiring the system trajectories to evolve in some \textit{safe} set at all times while visiting some goal set(s) are common in safety-critical applications. Constraints pertaining to the convergence of the trajectories to certain sets within a fixed time often appear in time-critical applications, e.g., when a task must be completed within a given time interval. 
% \textit{Spatiotemporal} specifications impose spatial (state) as well as temporal (time) constraints on the system trajectories. 
Most popular approaches on the control synthesis under such specifications include quadratic programming techniques, where the safety requirements are encoded via control barrier functions (CBFs) and convergence requirements via control Lyapunov functions (CLFs), see e.g. \cite{ames2017control,garg2019prescribedTAC}, or via one function that encodes both the safety and convergence requirements \cite{lindemann2019control,li2018formally}.

Quadratic program (QP)-based approaches have gained popularity for control synthesis \cite{li2018formally,ames2017control,garg2021FxTSDomain,garg2019prescribedTAC} in real-time, since QPs can be solved efficiently.
% Most of the prior work (except for a few studies, e.g.\cite{ames2017control,garg2019prescribedTAC}) addresses control design for safety along with convergence without explicitly considering control input constraints.
% Such constraints are considered in \cite{ames2017control}, where performance and safety objectives are represented using CLFs and CBFs, respectively, along with control input constraints in a QP. 
Most of the prior work, except \cite{li2018formally,lindemann2019control}, deals with asymptotic or exponential convergence of the system trajectories to the desired goal set.  Fixed-time stability (FxTS) \cite{polyakov2012nonlinear} is a stronger notion of stability, where the time of convergence does not depend on the initial conditions. To address the problem of {FxTS} in the presence of input constraints, new Lyapunov conditions are proposed in \cite{garg2021FxTSDomain}, characterizing a domain of attraction for FxTS under input constraints. 

As argued in the recent article \cite{cohen2020approximate}, \textit{myopic} control synthesis approaches relying solely on QPs are susceptible to infeasibility. To circumvent this issue, combining a high-level planner with a low-level controller has become a popular approach \cite{herbert2017fastrack,yin2019optimization, smith2019continuous,singh2017robust,rosolia2020multi}. The underlying idea in these strategies is to design low-level controllers to track a reference trajectory, which is computed by a high-level planner using a simplified model.
In~\cite{herbert2017fastrack} the authors presented the FaSTrack framework where the error bounds are computed using Hamilton-Jacobi (HJ) reachability analysis. 
% Afterwards, FaSTrack applies the high-level control action if the tracking error does not exceed the error bounds, otherwise, it applies the safe controller given by the reachability analysis. 
This framework has been extended in~\cite{yin2019optimization}, where the authors used Sum-Of-Squares (SOS) to compute the tracking error bounds. The constraint on the planner and the tracking error bounds may also be updated using an iterative procedure as suggested in~\cite{smith2019continuous}. 
% SOS programming is used also in~\cite{yin2019optimization} to compute the error bounds. 
% The authors parametrized the constraints of the high-level Model Predictive Controller (MPC) as a function of the tracking error and computed the controller which maximized the planner constraint set. 
A different approach that uses Model Predictive Controller (MPC) for high-level planning has been presented in~\cite{singh2017robust} where the tracking controller is designed using control contraction metrics.

In this work, we introduce the notion of periodic safety where the system trajectories are required to remain in a safe set for all times and visit a subset of this safe set periodically. Inspired from \cite{rosolia2020multi}, we use a multi-rate control framework where the low-level controller and the high-level planner operate at different frequencies. The high-level planner is used to generate a reference trajectory in the interior of a subset of the safe set, and the low-level controller to track this reference trajectory. The contribution of this paper is twofold. First, we combine the concepts of fixed-time stable Lyapunov functions \cite{garg2021FxTSDomain} and control barrier functions \cite{ames2017control} to define the notion of fixed-time barrier functions. We use it in a provably feasible QP, guaranteeing fixed-time convergence to a neighborhood of the reference trajectory from a region of attraction under input constraints. Second, we design the constraints of the MPC problem to consider this region of attraction of the low-level controller in the high-level planner. Compared to~\cite{rosolia2020multi}, we limit the rate of change for the planned trajectory so that the low-level controller is able to track the resulting reference trajectory within a predefined error bound. The limitation on rate change along with the tracking within the chosen error bound helps the system achieve periodic safety. Furthermore, we demonstrate that such constraints, which guarantee the correct operation of the low-level controller, do not jeopardize the feasibility of the MPC problem. Simulations demonstrate the fixed-time stabilizing low-level controller successfully satisfying the state constraints while an exponentially stabilizing controller \cite{ames2017control} fails. %Simulation results demonstrate the benefit of using a fixed-time stabilizing low-level controller via a scenario where an exponentially stabilizing controller \cite{ames2017control} fails to satisfy the state constraints, while the proposed controller successfully does so. 

% \subsection{Preliminaries}
\textit{Notation:} The Minkowski sum of two sets $\mathcal{X},\mathcal Y\subset \mathbb{R}^n$ is denoted as $\mathcal{X}_T\oplus\mathcal{Y}$, and the Pontryagin difference as $\mathcal{X}\ominus\mathcal{Y}$. The set of positive integers, non-negative integers and non-negative reals is denoted as $\mathbb N$, $\Zp$, and $\Rp$, respectively. 
% For any $i\in \mathbb N$ and $T>0$, the time interval $[(i-1)T, iT)$ is denoted as $\mathcal T_i$. 

\section{Problem formulation}\label{sec:problemFormulation}
We first introduce the problem under study and then present some related background material.
% and then we describe the problem formulation.
% \subsection{Problem formulation}

\noindent \textbf{System Model}: We consider nonlinear control affine system of the following form:
\begin{equation}\label{eq:sysModel}
    \dot x = f\big( x \big) + g\big( x \big) u,
\end{equation}
where $f:\mathbb R^{n_x}\rightarrow\mathbb R^{n_x}$ and $g:\mathbb R^{n_x}\rightarrow\mathbb R^{n_x\times n_u}$ are locally Lipschitz continuous functions with $f(0) = g(0) = 0$, $u \in \mathbb{R}^{n_u}$ is the input and $x \in \mathbb{R}^{n_x}$ is the system state. The control objective is to design a controller $u:\Rp\times \R^{n_x} \to \mathcal{U} \subset \R^{n_u} $ such that solutions to the closed loop system: 
\begin{equation} \label{eq:closed loop kx}
    \dot{x} = f_\textrm{cl}(t,x) \triangleq f(x) + g(x) u(t,x), \; t_0 = 0,
\end{equation}
satisfy the state constraints: %The control objective is to design an input $u$ such that the following state and input constraints are satisfied:
\begin{equation}\label{eq:lowLevelCnstr}
    % x(t) \in \mathcal{X}_T_c, 
    \hspace{-5pt}x(t)\in \mathcal X, \; \forall  t \in \Rp, \; x(iT) \in \mathcal{X}_T, \;  \forall i\in \Zp,
    %\hspace{-5pt}u(t) \in \mathcal{U}, \; x(iT) \in \mathcal{X}_T, \; x(t)\in \mathcal X, \; \forall  t \in \Rp, \; \forall i\in \Zp,
\end{equation}
% for all $ t \in \Rp$ and for all $i \in \Zp$,
where $\mathcal X\subset\mathbb R^{n_x}$ and $\mathcal X_T \subset \mathcal{X} \ominus \mathcal D$ with $\mathcal D = \{x\; |\; \|x\|\leq d\}$ for some $d>0$. We assume that the input constraint set is given as $\mathcal{U} = \{u\; |\; A_uu\leq b_u\}$ for some $A_u\in \mathbb R^{m\times n_u}, b_u\in \mathbb R^{m}$. The time constant $T$ is a user-specified parameter that defines the update frequency of the planned trajectory, as will be further clarified in Section \ref{sec:midLayer}. %specified by the user and, as it will be clear later on, it defines the frequency at which the controller updates the planned trajectory.
% \textcolor{blue}{The control design objective also includes driving the closed-loop trajectories to a set $\mathcal X_G = \{x\; |\; \|x\|_2\leq r_g\}\subset \mathcal X$ for some user-defined $r_g>0$.}
The control objectives as described in \eqref{eq:lowLevelCnstr} require safety of the system in terms of forward invariance of the set $\mathcal X$, and \textit{periodic} fixed-time stability of the set $\mathcal X_T$, which means that the system trajectories need to visit this set at each discrete time $iT$, $i\in \Zp$. To capture these objectives, we introduce the notion of periodic safety. 

\begin{definition}[\textbf{Periodic safety}]\label{def:period stab}
Given the sets $\mathcal{X}_T, \mathcal{X}\subset \mathbb R^{n_x}$, with $\mathcal X_T\subset \mathcal X$, and a time period $T>0$, the set $\mathcal X_T$ is said to be periodically safe w.r.t. the safe set $\mathcal X$ for the closed-loop system \eqref{eq:closed loop kx} if for all $x(0)\in \mathcal X_T$, the following holds 
\begin{align}
    x(iT) \in \mathcal X_T, \quad x(t)\in \mathcal X, \; \forall \ i\in \mathbb N, \; \forall t\geq 0. 
\end{align}
\end{definition}

\begin{figure}[!ht]
    \centering
        \includegraphics[width=0.5\columnwidth,clip]{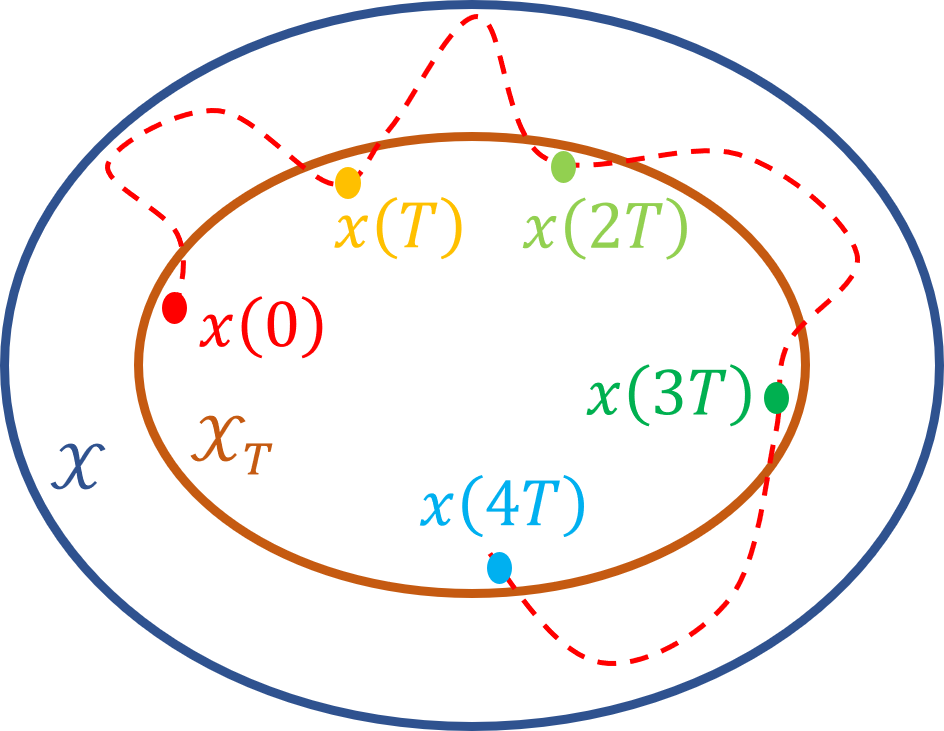}
    \caption{Illustration of periodic safety of the set $\mathcal X_T$ w.r.t. the set $\mathcal X$.}\label{fig:perStab1}
\end{figure}

\noindent Figure \ref{fig:perStab1} illustrates the periodic safety where the system trajectories visit the set $\mathcal X_T$ periodically, while remaining inside the safe set $\mathcal X$. Note that this notion is stronger than that of conditional invariance as defined in \cite{ladde1974flow}, where the set $\mathcal X$ is called conditionally flow-invariant for the closed-loop system~\eqref{eq:closed loop kx} if for all $x(0)\in \mathcal X_T\subset \mathcal X$, it holds that $x(t)\in \mathcal X$ for all $t\geq 0$. In particular, periodic safety of $\mathcal X_T$ w.r.t. $\mathcal X$ implies that $\mathcal X$ is conditionally flow-invariant.
% \subsection{Preliminaries}
% \noindent \textbf{Preliminaries:} 
Next, we define the notion of fixed-time domain of attraction: 
\begin{definition}\label{def:FxT-DoA}
Given a set $\mathcal{C}\subset \mathbb R^{n_x}$ and a time $T>0$, a set $D_{\mathcal{C}}\subset \mathbb R^{n_x}$ is a Fixed-Time Domain of Attraction (FxT-DoA) of the set $\mathcal{C}$ for the closed-loop system \eqref{eq:closed loop kx}, if 
\begin{itemize}
    \item[i)] for all $x(0) \in D_{\mathcal{C}}$, $x(t) \in D_{\mathcal{C}}$ for all $t\in [0, T)$, and
    \item[ii)] there exists $0\leq T_{\mathcal{C}}\leq T$ such that $\lim_{t\to T_{\mathcal{C}}}x(t) \in  \mathcal{C}$. 
\end{itemize}
\end{definition} 
\noindent The concept of FxT-DoA is important under a constrained input $u\in \mathcal U$, as it is not possible to guarantee that fixed-time convergence can be achieved for arbitrary initial conditions. To characterize this FxT-DoA, inspired from \cite{li2018formally}, we introduce a class of barrier functions termed fixed-time barrier functions. 
% and a set $S_{\mathcal D}$ such that $S\subset S_{\mathcal D}$. 

\begin{definition}\label{def:FxT barrier}
% A function $h:\mathbb R^{n_x}\rightarrow\mathbb R$ is called a FxT barrier for the the set $S\subset \mathbb R^{n_x}$ for system \eqref{eq:sysModel} with respect to the set $S_{\mathcal D} = S\oplus \mathcal D$ if
% \begin{itemize}
%     \item[i)] $h(x)<0$ for all $ x\notin S$ and  $h(x)>0$ for all $x\in \textnormal{int}(S)$;
%     \item[ii)] there exist $\delta, \alpha>0$, $\gamma_1>1$ and $0<\gamma_2<1$ such that the following holds for all $x\in S_{\mathcal D}$
%     \begin{align}\label{eq:FxTBarrier}
%     \sup_{u\in \mathcal U}\{L_fh(x) + L_gh(x)u\}\geq &-\delta h(x) + \alpha  \min\{0,h(x)\}^{\gamma_1}\nonumber \\
%     & +\alpha \min\{0,h(x)\}^{\gamma_2}.
% \end{align}
% \end{itemize}
A continuously differentiable function $h:\mathbb R^{n_x}\rightarrow\mathbb R$ is a FxT barrier function for the the set $\mathcal{S} = \{x\; |\; h(x)\geq 0\}$ with time $T_{\mathcal S}>0$ for the closed-loop system \eqref{eq:closed loop kx} if there exist parameters $\delta\in \mathbb R$, $\alpha>0$, $\gamma_1 = 1+\frac{1}{\mu}$ and $\gamma_2=1-\frac{1}{\mu}$ for some $\mu>1$, such that the following holds:
\begin{align}\label{eq:FxTBarrier}
\hspace{-10pt}\dot h(x)\geq -\delta h(x) + \alpha  \min\{0,h(x)\}^{\gamma_1}+\alpha \min\{0,h(x)\}^{\gamma_2},
\end{align}
for all $x\in D_{\mathcal S}\subset\mathbb R^{n_x}$ where $T_{\mathcal S}$ and $D_{\mathcal S}$ are functions of $\frac{\delta}{2\alpha}$.
% \begin{align*}
%     D_{\mathcal S} & = \begin{cases}
%     \mathbb R^{n_x}; &  r<1,\\
%     \left\{x\; |\; h(x)\geq -k^\mu\left(r-\sqrt{r^2-1}
%     % \frac{\delta -\sqrt{\delta ^2-4\alpha^2}}{2\alpha}
%     \right)^\mu\right\}; & r\geq 1,
%     %  \left\{x\; |\; V(x)\leq k^\mu\Big(\frac{\alpha }{\alpha }\Big)^\frac{\mu}{2}\right\}; & \delta  = 2\alpha,\\
%     \end{cases},
% \end{align*}
% where $r = \frac{\delta}{2\alpha}$ and $0<k<1$.
\end{definition}

\noindent Using \eqref{eq:FxTBarrier}, it follows from \cite[Theorem 1]{garg2021FxTSDomain} that the set $\mathcal D_{\mathcal S}$ is a FxT-DoA of the set $\mathcal S$ with time $T_{\mathcal S}$, where{\small
\begin{align*}
    D_{\mathcal S} & = \begin{cases}
    \mathbb R^{n_x}; &  r<1,\\
    \left\{x\; |\; h(x)\geq -k^\mu\left(r-\sqrt{r^2-1}
    % \frac{\delta -\sqrt{\delta ^2-4\alpha^2}}{2\alpha}
    \right)^\mu\right\}; & r\geq 1,
    %  \left\{x\; |\; V(x)\leq k^\mu\Big(\frac{\alpha }{\alpha }\Big)^\frac{\mu}{2}\right\}; & \delta  = 2\alpha,\\
    \end{cases},\\
    T_{\mathcal S} & = \begin{cases}
    \frac{\mu\pi}{\alpha\sqrt{1-\texttt{r}^2}}; &  \hspace{104pt}r<\texttt{r},\\
    \frac{\mu k}{\alpha(1-k)}; & \hspace{104pt} r\geq \texttt{r},
    \end{cases},
\end{align*}}\normalsize
with $r = \frac{\delta}{2\alpha}$ and $0<\texttt{r},k<1$. In particular, existence of a FxT barrier function $h$ implies: 1) forward invariance of the set $D_{\mathcal S}$ and 2) convergence to the set ${\mathcal S}$ within time $T_{\mathcal S}$.

\section{Multi-Rate Control}\label{sec:multi-rate strategy}
In this section, we present a hierarchical strategy where we first design a high-level planner that generates a reference trajectory $z(t)$, and then, a low-level controller that tracks this reference trajectory to guarantee that the closed-loop trajectory $x(t)$ satisfies \eqref{eq:lowLevelCnstr}. The control input is defined as
% \begin{equation}\label{eq:policy}
%     \pi(x) = \pi_l(x(t), z(t), u_m(t)) + \pi_m(x(t)) = u_l(t) + u_m(t),
% \end{equation} 
\begin{equation}
    u(t) = u_l(t) + u_m(t),
\end{equation} 
where $u_l$ and $u_m$ are defined using the policy $\Pi$ defined as\footnote{The closed-loop solutions of a sampled-data system are uniquely determined under piecewise continuous, bounded control inputs \cite[Section 2.2]{lars2011nonlinear}}: 
\begin{equation}\label{eq:policy}
    \begin{aligned}
    \Pi : \begin{cases} u_l(t) = \pi_l\big(x(t), u_m(t), i\big),~ \dot u_m(t) = 0, & t \in \mathcal{T}_i\\
     u_l^+(t) = u_l(t),~u_m^+(t) = \pi_m\big(x^+(t)\big), & t / T \in \mathbb{N} \end{cases},
\end{aligned}  
\end{equation}
where $\mathcal T_i = [(i-1)T, iT)$. Here, $\pi_m: \R^{n_x} \to \mathcal{U}_M\subset \mathcal U$ is the control policy from the high-level planner, to be designed in Section \ref{sec:midLayer}, that generates a reference trajectory using a Linear Time-Invariant (LTI) model of system \eqref{eq:sysModel}, and $\pi_l:\mathbb R^{n_x}\times\mathbb R^{n_u}\times\mathbb N\rightarrow\mathcal U$ is the low-level control policy, to be designed in Section \ref{sec: low level u}, that helps track this reference trajectory. The constraint set $\mathcal U_M\subset \mathcal U$ dictates how much of the control authority is reserved individually for the high-level planner and the low-level controller, and is a design parameter.

\subsection{High-level planning}\label{sec:midLayer}
In this section we describe the high-level planning strategy. \textbf{Reference Model:}  We assume that the reference trajectory $z(t)$ is generated using the following piecewise LTI model:
\begin{equation}\label{eq:referenceModel}
\begin{aligned}
    \Sigma_{z} : \begin{cases}
    \begin{matrix*}[l] \dot{z}(t) = A z(t)+ B u_m(t) \end{matrix*}, &  t \in \cup_{i=0}^\infty  (iT, (i+1)T)\\
    \begin{matrix*}[l] z^+(t) = \Delta_{z}(x^-(t)) \end{matrix*}, &  t \in \cup_{i=0}^\infty \{iT\} \\
    \end{cases},
\end{aligned}
\end{equation}
where $T$ from~\eqref{eq:lowLevelCnstr} is specified by the user and $z^-(t) = \lim_{\tau \nearrow t}z(\tau)$ and $z^+(t) = \lim_{\tau \searrow t}z(\tau)$ denote the right and left limits of the reference trajectory $z(t) \in \mathbb{R}^n$, which is assumed right continuous. The matrices $(A, B)$ are known and, in practice, may be computed by linearizing the system dynamics~\eqref{eq:sysModel} about the equilibrium point, i.e., the origin.
Finally, the reference input $u_m(t) \in \mathbb{R}^d$ and the \textit{reset map} $\Delta_{z}$, which depends on the state of the nonlinear system~\eqref{eq:sysModel}, are given by the higher layer as discussed next.

\vspace{2pt}
\noindent\textbf{Model Predictive Control:}
We design a Model Predictive Controller (MPC) to compute the high-level input $u_m(t)$ that defines the evolution of the reference trajectory in~\eqref{eq:referenceModel}, and to define the reset map $\Delta_{z}$ for the LTI model~\eqref{eq:referenceModel}. The MPC problem is solved at $1/T$ Hertz and therefore the reference high-level input is piecewise constant, i.e., $\dot u_m(t) =0~\forall t \in \mathcal{T}$ where $\mathcal T = \cup_{i=0}^\infty  (iT, (i+1)T)$. 
% \begin{equation*}
%     \dot u_m(t) =0~\forall t \in \mathcal{T} = \cup_{i=0}^\infty (iT, (i+1)T).
% \end{equation*}
First, we introduce the following discrete-time linear model:
\begin{equation}\label{eq:linearDiscreteSystem}
    z^d_{i+1} = \bar A z^d_i + \bar B v_i,
\end{equation}
where the transition matrices are $\bar A = e^{A T} \text{ and } \bar B = \int_0^{T} e^{A(T-\eta)}B d\eta$. Now notice that, as the high-level input  $u_m$ is piecewise constant, if at time $t_i = iT$ the state $z(iT)=z^+(iT)=z^d_i$ and $u_m(iT)=v_i$, then at time $t_{i+1} = (i+1)T$ we have that \begin{equation}\label{eq:relDisCon}
    z^-((i+1)T)=z^d_{i+1}.
\end{equation}
Given the discrete-time model~\eqref{eq:linearDiscreteSystem} and the state of the nonlinear system~\eqref{eq:sysModel} $x(iT)$, we solve the following finite-time optimal control problem at time $t_i = iT \in \mathcal{T}^c$:
% \begin{equation}
% \begin{aligned}
% J(x(iT)&)=\\
\begin{subequations}\label{eq:ftocp}
\begin{align}
    \min_{\boldsymbol{v}_i, z_{i|i}^d} \quad &\sum_{k = i}^{i+N-1}\big( || z_{k|i}^d||_Q + ||v_{k|i}||_R \big) + ||z_{i+N|i}^d||_{Q_f} \\
    \text{s.t.} ~\quad & z^d_{k+1|i} = \bar A z^d_{k|i} + \bar B v^d_{k|i}\\
    & ||z^d_{k+1|i} - z^d_{k|i}||_2 \leq d-c \label{eq: zi zi+1 close}\\ 
    &  z^d_{k|i} \in \mathcal{X}_T \ominus \mathcal{C}, ~ v^d_{k|i} \in \mathcal{U}_m \\
    &  z^d_{i|i} - x(iT) \in \mathcal{C} \label{eq: x z close}\\
    &  z^d_{i+N|i} \in \mathcal{X}_F ,\forall k = \{i, \ldots, i+N-1 \}
\end{align}
\end{subequations}
% \end{aligned}
% \end{equation}
where $||p||_Q = p^\top Qp$ and $\mathcal C = \{x\; |\; \|x\|\leq c\}$ for some $0<c<d$ such that $\mathcal X_T\ominus\mathcal C \neq \emptyset$. Problem~\eqref{eq:ftocp} computes a sequence of open-loop actions $\boldsymbol{v}_i^d=[v^d_{i|i},\ldots,v^d_{i+N|i}]$ and an initial condition $z^d_{i|i}$ such that the predicted trajectory steers the system to the terminal set $\mathcal{X}_F\subset \mathcal X_T$, while minimizing the cost and satisfying state and input constraints.
Let 
\begin{equation}\label{eq:mpcOpt}
    % \begin{aligned}
        \boldsymbol{v}_i^{d,*}=[v^{d,*}_{i|i},\ldots,v^{d,*}_{i+N|i}], \quad \boldsymbol{z}_i^{d,*}=[z^{d,*}_{i|i},\ldots,z^{d,*}_{i+N|i}]
    % \end{aligned}
\end{equation} be the optimal solution of \eqref{eq:ftocp}, then the high-level policy is \begin{equation}\label{eq:midLevPolicy}
\begin{aligned}
    \pi_{m}(x(iT)) = \begin{cases}
    \begin{matrix*}[l] {u_m}(t) = v^{d,*}_{i|i} \end{matrix*} &  t = iT \in \mathcal{T}^c\\
    \begin{matrix*}[l] \dot u_m(t) \!=\! 0 \end{matrix*} &  t \in \mathcal{T} \\
    \end{cases}
\end{aligned}
\end{equation}
Finally, we define the reset map for~\eqref{eq:referenceModel} as follows:
\begin{equation}\label{eq:returnMap}
\begin{aligned}
\Delta_{z}(x(iT)) = z_{i|i}^{d,*}.
\end{aligned}
\end{equation}
% The constraints in the MPC in \eqref{eq:ftocp} guarantee that the planned reference trajectory $z(t)$ satisfies $z^-(iT)\in \mathcal X_T\ominus\mathcal C$, and  $\|z^-(iT)-z^-((i+1)T)\| \leq d-c$ for all $i\in \Zp$. 

\subsection{Low-level control synthesis}\label{sec: low level u}
In this section we design the low-level policy $\pi_l$.
% so that the closed-loop trajectories satisfy $\|x(iT)-z^-(iT)\|\leq c$ for all $i\in \mathbb N$, which, along with the constraints imposed on $z(t)$, would imply that the closed-loop trajectories satisfy \eqref{ass:lowLevel}, as shown in Section \ref{sec:properties}. 
Consider the system dynamics \eqref{eq:sysModel} under the effect of the policy \eqref{eq:policy}:
\begin{align}\label{eq:closed_loop_system}
    \dot x(t) = f\big(x(t)\big) +g\big(x(t)\big)\big(u_l(t)+u_m(t)\big).
\end{align}
We define the sets $\mathcal D_i$ and $\mathcal C_i$ as
\begin{align}
   \mathcal D_{i} & \triangleq z^-(iT) \oplus \mathcal D = \{x\; |\; \|x-z^-(iT)\|\leq d\},\\
    \mathcal{C}_i & \triangleq z^-(iT) \oplus \mathcal C = \{x\; |\; \|x-z^-(iT)\|\leq c\}.
\end{align}
% Note that with $z^-(iT)\in \mathcal X_T\ominus \mathcal C$, and it follows that $\mathcal D_i\subset\mathcal X$ and $\mathcal C_i\subset \mathcal X_T$ for all $i\in \mathbb N$ (see Figure \ref{fig:PS temp}).  Furthermore, the constraint $\|z^-(iT)-z^-((i+1)T)\| \leq d-c$ helps guarantee that $\mathcal C_i\subset \mathcal D_{i+1}$. 
We show in Section \ref{sec:properties} that $\mathcal C_i\subset \mathcal D_{i+1}$ (guaranteed by bound on the rate change of the reference trajectory $z(t)$ in \eqref{eq: zi zi+1 close}) along with $\mathcal D_i\subset\mathcal X$ and $\mathcal C_i\subset \mathcal X_T$ (guaranteed by \eqref{eq: x z close}) guarantees that closed-loop trajectories meet the objectives in \eqref{eq:lowLevelCnstr}.
% \begin{figure}[!ht]
%     \centering
%         \includegraphics[width=0.5\columnwidth,clip]{figures/periodic_safety_illus_2.png}
%     \caption{Illustration of the sets $\mathcal C$ and $\mathcal D$ as the neighborhoods of the point $z_i = z^-(iT)$. 
%     % Here, the points $z_i$ are defined as $z_i = z^-(iT)$. 
%     }\label{fig:PS temp}
% \end{figure}
Under these considerations, the low-level control objective for $t\in \mathcal T_i = [(i-1)T, iT)$ is to design the policy $\pi_l$ such that the set $\mathcal D_i$ is FxT-DoA for the set $\mathcal{C}_i$.
% (see Figure \ref{fig:PS temp}).  
To this end, for the time interval $\mathcal T_i$ with $i\in \Zp$, consider the candidate FxT barrier function $h_i:\mathbb R^{n_x}\rightarrow\mathbb R$ defined as {
\begin{align}\label{eq: FxT barrier h_i}
    h_i(x(t)) = \frac{1}{2}c^2-\frac{1}{2}\|x(t)-z^-(iT)\|^2, \quad t\in \mathcal T_i.
\end{align}}
and define the following QP:{\small
\begin{subequations}\label{QP gen}
\begin{align}
\min_{u_l, \delta } \; &\frac{1}{2}u_l^2 + \frac{1}{2}  \delta ^2 + c\delta \\
    \textnormal{s.t.} \; &  A_u(u_l+u_m)  \leq  \; b_u, \label{C1 cont const}\\
    & L_fh_i(x) + L_gh_i(x)(u_m+u_l)  \geq  -\delta \hspace{1pt} h_i(x) \nonumber \\& \hspace{120pt}+\alpha \min\{0,h_i(x)\}^{\gamma_1} \nonumber\\
    &\hspace{120pt} +\alpha \min\{0,h_i(x)\}^{\gamma_2} \label{C2 stab const}
    % L_fh_S(x) + L_gh_S(x)v \leq & \; -\delta_2h_S(x),\label{C3 safe const}
\end{align}
\end{subequations}}\normalsize
where $c>0$ and $u_m = \pi_m(x(i-1)T)$. We denote the optimal solution of the QP \eqref{QP gen} as $(u_l^\star(x,u_m,i), \delta ^\star(x,u_m,i))$ and define the low-level policy as 
\begin{align}\label{eq:low_level_policy}
    \pi_l(x(t),u_m(t),i) = u_l^\star(x(t),u_m,i).
\end{align}
The constraint \eqref{C1 cont const} guarantees that $u = u_l+u_m\in \mathcal U$. The parameters $\mu, \alpha, \gamma_1, \gamma_2$ in \eqref{C2 stab const} are fixed, and are chosen as $\alpha  = \max\left\{\frac{\mu k}{(1-k)T},\frac{\mu \pi}{T\sqrt{1-\texttt{r}^2}}\right\}$, $\gamma_1 = 1+\frac{1}{\mu}$ and $\gamma_2 = 1-\frac{1}{\mu}$ with $\mu>1$ and $0<\texttt{r}, k<1$, so that the closed-loop trajectories reach the zero super-level set of the FxT barrier function $h_i$ within the time step $T$. 

\section{Closed-loop Properties}\label{sec:properties}
In this section we show the properties of the proposed multi-rate control architecture. Consider the closed-loop system \eqref{eq:closed_loop_system} under the control input \eqref{eq:policy} with policies $\pi_m$ and $\pi_l$ defined in \eqref{eq:midLevPolicy} and \eqref{eq:low_level_policy}, respectively. Below, we explain how we show that the closed-loop trajectories satisfy \eqref{eq:lowLevelCnstr}:
\begin{itemize}
    \item[A.] First, we show in Lemma \ref{lemma: DoA rM} that under the low-level controller $u_l$, the set $\mathcal D_i$ is FxT-DoA for the set $\mathcal{C}_i$, so that starting from any $x((i-1)T)\in \mathcal D_i$, the closed-loop trajectories reach the set $\mathcal{C}_i$ within time $T$. 
    \item[B.] Next, in Theorem \ref{th:main_result} we show recursive feasibility of the MPC so that the closed-loop trajectories satisfy $x((i-1)T)\in \mathcal D_i$ for all $i \in \mathbb N$, which along with item A, implies that the closed-loop trajectories satisfy \eqref{eq:lowLevelCnstr}. 
\end{itemize}
\subsection{Fixed Time Domain of Attraction}
% \noindent\textbf{Feasibility of the QP:} 
% \noindent \textbf{FxT-DoA of $\mathcal C_i$:} 
In this section, we show that under the low-level controller defined as the optimal solution of the QP \eqref{QP gen}, the set $\mathcal D_i$ is a FxT-DoA for the set $\mathcal{C}_i$. To this end, it is essential that the QP \eqref{QP gen} is feasible for all $x$ so that the low-level controller is well-defined. The slack term $\delta$ ensures the feasibility of the QP \eqref{QP gen} for all $x\notin \partial \mathcal{C}_{i}$. For the feasibility of the QP \eqref{QP gen} for $x\in \partial \mathcal{C}_{i}$, we make the following assumption, which is a standard assumption in the literature for guaranteeing forward invariance (see \cite{blanchini1999set} for more details). 

\begin{assumption}\label{assum: QP feas bd}
For all $x\in \partial \mathcal{C}_i$, $i\in \mathbb Z_+$, and $u_m\in \mathcal U_M$, there exists $u_l\in \mathcal U_l$ such that the following holds:
\begin{align*}
    L_fh_i(x) + L_gh_i(x)(u_m+u_l) \geq 0.
\end{align*}
\end{assumption}

From Definition \ref{def:FxT barrier}, we know that FxT-DoA depends on the ratio $\frac{\delta}{2\alpha}$. We make the following assumption on the maximum value of $\delta^\star(x)$ as the solution of the QP \eqref{QP gen} so that $h_i$ is a FxT barrier function for $\mathcal{C}_i$ and $\mathcal D_i$ is its FxT-DoA.

\begin{assumption}\label{ass:lowLevel}
% For each \textcolor{orange}{$i \in \Zp$}, it holds that  $\|z^-((i+1)T)-z^-(iT)\|\leq d-c$. Furthermore, 
For each interval $\mathcal T_i$, the solution $\left(u^\star(x(t),u_m,i), \delta ^\star(x(t),u_m,i)\right)$ of the QP \eqref{QP gen} is continuous for all $t\in \mathcal T_i$ and the following holds
% \begin{align}
%     r_M \leq 
% \end{align}
\begin{align}\label{eq:bar r}
    \sup_{t\in \mathcal T_i}\frac{\delta ^\star(x(t),u_m,i)}{2\alpha} \leq \bar r \triangleq \frac{(\frac{d^2-c^2}{2})^\frac{1}{\mu}}{2k} + \frac{k}{2(\frac{d^2-c^2}{2})^\frac{1}{\mu}},
    % \delta_M \leq \alpha\frac{\frac{r^\frac{2}{\mu}}{k^2}+2}{\frac{r^\frac{1}{\mu}}{k}},
\end{align}
where $\alpha = \max\{\frac{\mu k}{(1-k)T},\frac{\mu \pi}{T}\}$, $\gamma_1 = 1+\frac{1}{\mu}$ and $\gamma_2 = 1-\frac{1}{\mu}$ for some $\mu>1$ and $0<k<1$.
\end{assumption}

\begin{remark}
% Why this assumption makes sense? How can we check? Just run a bunch of sim? Maybe we can have something here saying why this is reasonable
As argued in \cite{garg2021FxTSDomain}, for given input bounds (dictated by the set $\mathcal U)$, the value of the slack term $\delta$ in QP \eqref{QP gen} depends on the time of convergence $T$. Furthermore, the upper-bound in \eqref{eq:bar r} depends on the parameters $c$ and $k$, where $0<c<d$ is such that $\mathcal X_T\ominus\mathcal C$ is non-empty and $0<k<1$. Thus, in practice, numerical simulations can guide the choice of the parameters $c,k$, and the time $T$, so that \eqref{eq:bar r} can be satisfied. 
\end{remark}

% Recall that $(u_l^\star(x), \delta^\star(x))$ denotes the optimal solution of the QP \eqref{QP gen}. 
% \begin{lemma}\label{lemma: DoA rM}
% Consider the QP \eqref{QP gen} for the time interval $\mathcal T_i = [iT, (i+1)T)$. Let $r_M = \sup_{t\in \mathcal T_i} r^\star(x(t))$ with $r^\star(x(t)) \coloneqq \frac{\delta ^\star(x(t))}{2\alpha}$. Then, the domain of attraction $D$ for the domain $\mathcal{C}_{i+1}$  is given as 
% \begin{align}\label{eq: DoA express rM}
%     D & =  \begin{cases}\mathbb R^{n_x}; & r_M<1,\\
%     \left\{x\; |\; V(x)\leq k^\mu\left(r_M-\sqrt{r_M^2-1}\right)^\mu\right\}; & r_M\geq 1,
%     \end{cases}.
% \end{align}
% \end{lemma}

\begin{lemma}\label{lemma: DoA rM}
Suppose that Assumptions \ref{assum: QP feas bd}-\ref{ass:lowLevel} hold. Then, for each time interval $\mathcal T_i$ with $i \in \mathbb N$, under the control policy \eqref{eq:low_level_policy}, it holds that for all $x((i-1)T)\in \mathcal D_{i}$, the closed-loop trajectory $x(t)$ satisfies $x(t)\in \mathcal D_i$ for all $t\in \mathcal T_i$ and $x^-(iT)\in \mathcal{C}_{i}$.
\end{lemma}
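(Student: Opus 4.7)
The plan is to verify that on each interval $\mathcal T_i$ the candidate $h_i$ acts as a FxT barrier function in the sense of Definition \ref{def:FxT barrier}, whose associated FxT-DoA coincides with $\mathcal D_i$ and whose convergence time is upper bounded by $T$. Once this is in place, the two conclusions of the lemma, namely forward invariance of $\mathcal D_i$ on $\mathcal T_i$ and $x^-(iT)\in\mathcal C_i$, follow directly from the properties listed immediately after Definition \ref{def:FxT barrier}.

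I would first argue that the QP \eqref{QP gen} is pointwise feasible, so that the closed-loop dynamics under \eqref{eq:low_level_policy} are well defined on $\mathcal T_i$. Away from $\partial \mathcal C_i$, the slack $\delta$ can be taken large enough to satisfy \eqref{C2 stab const} (paired with any $u_l$ making \eqref{C1 cont const} feasible), whereas on $\partial \mathcal C_i$ Assumption \ref{assum: QP feas bd} furnishes a $u_l\in\mathcal U_l$ meeting \eqref{C2 stab const} with $\delta=0$. Combined with the continuity of $(u_l^\star,\delta^\star)$ asserted in Assumption \ref{ass:lowLevel}, this yields a unique closed-loop solution $x(\cdot)$ on $\mathcal T_i$. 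Along this solution, \eqref{C2 stab const} evaluated at the optimizer gives
\begin{align*}
\dot h_i(x(t)) \geq -\delta^\star(x(t),u_m,i)\,h_i(x(t)) &+ \alpha \min\{0,h_i(x(t))\}^{\gamma_1} \\ &+ \alpha \min\{0,h_i(x(t))\}^{\gamma_2}.
\end{align*}
Whenever $h_i(x(t))\leq 0$, the uniform bound $\delta^\star\leq 2\alpha\bar r$ from Assumption \ref{ass:lowLevel} implies $-\delta^\star h_i\geq -2\alpha\bar r\, h_i$, so $h_i$ satisfies the inequality \eqref{eq:FxTBarrier} with constant effective ratio $\delta/(2\alpha)=\bar r$, and a standard comparison argument against the scalar ODE with right-hand side $-2\alpha\bar r y+\alpha\min\{0,y\}^{\gamma_1}+\alpha\min\{0,y\}^{\gamma_2}$ legitimizes invoking the FxT-DoA conclusions of Definition \ref{def:FxT barrier}.

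It then remains to check two algebraic identities built into the parameter choices. By AM-GM applied to the two terms in \eqref{eq:bar r} one has $\bar r\geq 1$, so the second branch of the $D_{\mathcal S}$ formula applies; the expression for $\bar r$ was engineered precisely so that $k^\mu(\bar r-\sqrt{\bar r^2-1})^\mu=(d^2-c^2)/2$, yielding $D_{\mathcal S}=\{x:h_i(x)\geq -(d^2-c^2)/2\}=\mathcal D_i$. Similarly, the choice of $\alpha$ in \eqref{QP gen} makes $T_{\mathcal S}\leq T$ in either branch of the $T_{\mathcal S}$ formula. Combining these, $x(t)\in\mathcal D_i$ for all $t\in\mathcal T_i$ and $x^-(iT)\in\mathcal C_i$. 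The main obstacle I anticipate is the time dependence of $\delta^\star$: the FxT-DoA result quoted from Definition \ref{def:FxT barrier} is stated for a constant ratio $r$, so the comparison-lemma step that replaces $\delta^\star(x(t),u_m,i)$ by the uniform upper bound $2\alpha\bar r$ while preserving the form of the right-hand side is the technically delicate part; once past it, the remaining work collapses to the algebraic identities encoded in the definitions of $\bar r$ and $\alpha$.
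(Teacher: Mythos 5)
Your overall roadmap matches the paper's (QP feasibility via Assumption~\ref{assum: QP feas bd}, reduce the time-varying slack to a single ratio, match the domain formula to $\mathcal D_i$, and read off $T_{\mathcal S}\leq T$ from the choice of $\alpha$), and your AM--GM observation that $\bar r\geq 1$ and the algebraic identity $k^\mu(\bar r-\sqrt{\bar r^2-1})^\mu=(d^2-c^2)/2$ (valid when $((d^2-c^2)/2)^{1/\mu}\leq k$, an implicit assumption shared with the paper) are both correct. Two things, however, keep this from being a complete proof.

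First, the comparison step you flag as ``technically delicate'' is in fact wrong as written. You assert that for $h_i(x(t))\leq 0$, $\delta^\star\leq 2\alpha\bar r$ implies $-\delta^\star h_i\geq -2\alpha\bar r\,h_i$. Since $-h_i\geq 0$, multiplying $\delta^\star\leq 2\alpha\bar r$ by $-h_i$ gives $-\delta^\star h_i\leq -2\alpha\bar r\,h_i$, i.e., the opposite direction, so you cannot lower-bound $\dot h_i$ by the constant-$\delta$ form this way. You also implicitly conflate $\bar r$ with $r_M\triangleq\sup_t\delta^\star/(2\alpha)$: the AM--GM bound shows $\bar r\geq 1$, but Assumption~\ref{ass:lowLevel} only gives $r_M\leq\bar r$, so $r_M<1$ remains possible and the case split in \cite[Theorem~1]{garg2021FxTSDomain} is not eliminated. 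The paper avoids the comparison argument entirely by invoking that theorem directly for the time-varying ratio, where the domain is characterized by $\inf_t\bigl(r^\star-\sqrt{(r^\star)^2-1}\bigr)$ and one then uses monotonicity of $r\mapsto r-\sqrt{r^2-1}$ to collapse the infimum to $r_M$; that is the route to take rather than a pointwise comparison on the barrier inequality.

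Second, you are missing the final step: Definition~\ref{def:FxT-DoA} only yields some $T_{\mathcal C}\leq T$ with $\lim_{t\to T_{\mathcal C}}x(t)\in\mathcal C_i$, not that the trajectory remains in $\mathcal C_i$ up to $t=iT$. To conclude $x^-(iT)\in\mathcal C_i$ you must separately show forward invariance of $\mathcal C_i$ on $[T_{\mathcal C},iT)$. The paper does this by observing that once $h_i(x)\geq 0$, \eqref{C2 stab const} reduces to $\dot h_i\geq -\delta^\star h_i\geq -\delta_M h_i$ with $\delta_M=\sup_t\delta^\star$, and then applying \cite[Proposition~1]{ames2017control}. ``The properties listed after Definition~\ref{def:FxT barrier}'' assert forward invariance of $D_{\mathcal S}$, not of $\mathcal S$ itself, so they do not cover this.
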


\begin{proof}
Under Assumption \ref{assum: QP feas bd}, it follows from \cite[Lemma 2]{garg2019prescribedTAC} that the QP \eqref{QP gen} is feasible for all $x$.
% In the particular case when $\alpha  = \alpha $, let us examine how the domain of attraction $D$ in \eqref{eq: domain of attraction} is affected by the ratio $r^\star$. Define $r_M = \sup_{x\in S}r^\star(x)$. 
Denote $D_{\mathcal{C}_i}$ as a FxT-DoA for the set $\mathcal{C}_i$ for the time $T$. Note that by definition, $\mathcal{C}_i = \{x\; |\; h_i(x)\geq 0\}$. We first compute an expression for $D_{\mathcal{C}_i}$ and then, we show that under Assumption \ref{ass:lowLevel}, $\mathcal D_i \subseteq D_{\mathcal{C}_i}$. From \cite[Theorem 1]{garg2021FxTSDomain}, we know that the FxT-DoA $D_{\mathcal C_i}$ is given as a function of $r_M = \sup r^\star = \frac{\delta^\star}{2\alpha}$, i.e., the maximum value  of the ratio $r^\star$. We consider the two cases, namely $r_M<1$ and $r_M\geq 1$ separately. 

For $r_M<1$, it follows from \cite[Theorem 1]{garg2021FxTSDomain} that $D_{\mathcal{C}_i} =\R^{n_x}$ is the FxT-DoA for $\mathcal{C}_i$. Thus, $\mathcal D_i$ is also a FxT-DoA of the set $\mathcal{C}_i$. 

For $r_M\geq 1$, it follows from \cite[Theorem 1]{garg2021FxTSDomain} that a FxT-DoA (i.e., the set $D_{\mathcal{C}_i}$) is given as{\small
\begin{align*}
   D_{\mathcal{C}_i} = \left\{x\; |\; h_i(x)\geq -\inf_{t\in \mathcal T_i}k^\mu\left(r^\star(x(t))-\sqrt{(r^\star(x(t)))^2-1}\right)^\mu\right\}. 
\end{align*}}\normalsize
Note that{
% \begin{align}
%     D = \bigg\{x\; |\; & h_i(x)\geq -\inf_{t\in \mathcal T_i} k^\mu\left(r^\star(x(t))-\sqrt{(r^\star(x(t)))^2-1}\right)^\mu\bigg\} \nonumber \\
%     & =  \left\{x\; |\; h_i(x)\geq - k^\mu\left(r_M-\sqrt{r_M^2-1}\right)^\mu\right\} \label{eq:fxt_r_geq1} 
% \end{align}
\begin{align}
    \hspace{-10pt} \inf_{t\in \mathcal T_i} r^\star(x(t))-\sqrt{(r^\star(x(t)))^2-1} = r_M-\sqrt{r_M^2-1} \label{eq:fxt_r_geq1} 
\end{align}}\normalsize
where the equality follows from the fact that $\left(r-\sqrt{r^2-1}\right)$ is a monotonically decreasing function for $r\geq 1$. Thus, it follows that
$D_{\mathcal{C}_i} =  \left\{x\; |\; h_i(x)\geq - k^\mu\left(r_M-\sqrt{r_M^2-1}\right)^\mu\right\}$. With the FxT barrier function $h_i(x)$ defined in \eqref{eq: FxT barrier h_i}, the set $D_{\mathcal{C}_i}$ reads{\small
\begin{align}\label{eq: d_e set rm geq 1}
    D_{\mathcal{C}_i}
    % & = \left\{x\; |\; \frac{1}{2}c^2-\frac{1}{2}\|x-z^-(iT)\|^2\geq- k^\mu\left(r_M-\sqrt{r_M^2-1}\right)^\mu\right\} \nonumber\\
    & = \left\{x\; |\; \frac{1}{2}\|x-z^-(iT)\|^2\leq k^\mu\left(r_M-\sqrt{r_M^2-1}\right)^\mu+\frac{1}{2}c^2.\right\}
\end{align}}\normalsize
Now, under Assumption \ref{ass:lowLevel}, it holds that $r_M \leq  \frac{(\frac{d^2-c^2}{2})^\frac{1}{\mu}}{2k} + \frac{k}{2(\frac{d^2-c^2}{2})^\frac{1}{\mu}}$.
By re-arranging this inequality, we obtain that under Assumption \ref{ass:lowLevel}, it holds that
% in order for the set $\mathcal{D}_{i}$ to be the FxT-DoA for the set $\mathcal{C}_i$, in time $T$ as desired, it is required that $\mathcal{D}_{i} \subseteq D$, or, equivalently, it is required that 
\begin{align}\label{eq: r_m r c rel}
    \frac{1}{2}d^2 \leq k^\mu\left(r_M-\sqrt{r_M^2-1}\right)^\mu + \frac{1}{2}c^2.
\end{align}
Now, for any $x((i-1)T)\in \mathcal D_i$, it holds that $\|x((i-1)T)-z^-(iT)\|\leq d$. Thus, it follows from \eqref{eq: r_m r c rel} that $$\frac{1}{2}\|x((i-1)T)-z^-(iT)\|^2\leq k^\mu\left(r_M-\sqrt{r_M^2-1}\right)^\mu + \frac{1}{2}c^2,$$ for all $x((i-1)T)\in \mathcal D_i$. Using this, and \eqref{eq: d_e set rm geq 1}, it follows that $\mathcal D_i = D_{\mathcal{C}_i}$. Hence, we have that $\mathcal D_i$ is a FxT-DoA of the set $\mathcal{C}_i$. Thus, from \cite[Theorem 1]{garg2021FxTSDomain}, it follows that the closed-loop trajectories of \eqref{eq:closed loop kx} will reach the set $\mathcal{C}_{i}$ for any $x((i-1)T)\in \mathcal D_{i}$ within a fixed time $T_1$ that satisfies $T_1\leq \max\{\frac{\mu k}{\alpha(1-k)}, \frac{\mu\pi}{\alpha\sqrt{1-\texttt{r}^2}}\}$. For the choice of $\alpha = \max\{\frac{\mu k}{(1-k)T},\frac{\mu \pi}{T\sqrt{1-\texttt{r}^2}}\}$, it follows that $T_1\leq T$. Thus, the system trajectories reach the set $\mathcal{C}_{i}$ on or before $t = (i-1)T + T = iT$. 

Finally, we show that the closed-loop trajectories remain in the set $\mathcal{C}_i$ till $t = iT$, i.e., the set $\mathcal C_i$ is forward invariant for the closed-loop trajectories of \eqref{eq:closed loop kx}. Let $t = t_{i} \triangleq (i-1)T + T_1$ denote the first time instant when the closed-loop trajectories of \eqref{eq:closed loop kx} reach the boundary of the set $\mathcal{C}_{i}$, i.e., $h_i(x(t_i)) = 0$. From the analysis in the first part of the lemma, it holds that $t_{i}\leq iT$. From \eqref{C2 stab const}, it follows that $\dot h_i(x)\geq -\delta^\star(x,u_m,i) h_i(x)\geq -\delta_M h_i(x)$ for all $x\in \mathcal{C}_{i}$, where $\delta_M = \sup_{t\in \mathcal T_i}\delta^\star(x(t),u_m,i)$. The proof can be completed using \cite[Proposition 1]{ames2017control}. 
\end{proof}

Thus, satisfaction of \eqref{C2 stab const} implies that system trajectories reach the set $\mathcal{C}_{i}$ on or before $t = iT$, and stay there till $t = iT$. Now, in order for the closed-loop trajectories to reach the set $\mathcal{C}_{i+1}$ on or before $t = (i+1)T$, it is required that $x(iT)\in \mathcal D_{i+1}$, which is shown in the following lemma.   

\begin{lemma}\label{prop:mpcFeasibilityIMplications}
If the MPC problem~\eqref{eq:ftocp} is feasible at time $t_i=iT$, then $x(iT) \in \mathcal D_{i+1}$.
\end{lemma}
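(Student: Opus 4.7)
The plan is to chain two distance bounds furnished by the MPC constraints through a triangle inequality, using the fact that the optimizer's predicted discrete trajectory agrees with the continuous reference trajectory $z(t)$ at the sample times.

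First, I would unpack what feasibility of \eqref{eq:ftocp} at $t_i = iT$ buys us: there exist an initial condition $z^{d,*}_{i|i}$ and a sequence $\boldsymbol v^{d,*}_i$ satisfying all the constraints. By the definition of the reset map in \eqref{eq:returnMap}, we have $z^+(iT) = z^{d,*}_{i|i}$. Because $u_m$ is piecewise constant over $\mathcal{T}_{i+1}$ with value $v^{d,*}_{i|i}$ (see \eqref{eq:midLevPolicy}), the continuous LTI model \eqref{eq:referenceModel} evolved from $z^+(iT)$ to time $(i+1)T$ gives exactly $\bar A z^{d,*}_{i|i} + \bar B v^{d,*}_{i|i}$, which by the dynamics constraint in \eqref{eq:ftocp} equals $z^{d,*}_{i+1|i}$. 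This is the relation \eqref{eq:relDisCon}, and it identifies $z^-((i+1)T) = z^{d,*}_{i+1|i}$.

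Next I would apply the two feasibility-guaranteed bounds directly. Constraint \eqref{eq: x z close} gives $\|z^{d,*}_{i|i} - x(iT)\| \leq c$, and constraint \eqref{eq: zi zi+1 close} with $k = i$ gives $\|z^{d,*}_{i+1|i} - z^{d,*}_{i|i}\| \leq d-c$. Combining these via the triangle inequality,
\begin{align*}
\|x(iT) - z^-((i+1)T)\| &= \|x(iT) - z^{d,*}_{i+1|i}\| \\
&\leq \|x(iT) - z^{d,*}_{i|i}\| + \|z^{d,*}_{i|i} - z^{d,*}_{i+1|i}\| \\
&\leq c + (d-c) = d,
\end{align*}
so $x(iT) \in z^-((i+1)T)\oplus \mathcal D = \mathcal D_{i+1}$ by definition of $\mathcal D_{i+1}$.

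There is no real obstacle here; the lemma is essentially a budget argument showing that the two MPC constraints \eqref{eq: x z close} and \eqref{eq: zi zi+1 close} were engineered precisely so that $c + (d-c) = d$ closes up the triangle inequality. The only subtlety worth being explicit about is the identification $z^-((i+1)T) = z^{d,*}_{i+1|i}$, which requires tying the continuous reference model \eqref{eq:referenceModel} together with the discretization \eqref{eq:linearDiscreteSystem} via the piecewise-constant nature of $u_m$ established in \eqref{eq:midLevPolicy}; once this is noted the rest is a one-line bound.
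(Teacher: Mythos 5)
Your proof is correct and follows essentially the same route as the paper's: invoke the two MPC constraints~\eqref{eq: x z close} and~\eqref{eq: zi zi+1 close}, apply the triangle inequality to obtain $\|x(iT) - z_{i+1|i}^{d,*}\| \leq c + (d-c) = d$, and identify $z_{i+1|i}^{d,*} = z^-((i+1)T)$ via~\eqref{eq:relDisCon}. You spell out the justification of~\eqref{eq:relDisCon} (reset map plus piecewise-constant $u_m$) in slightly more detail than the paper, which simply cites the equation, but the substance is the same.
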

\begin{proof}
By assumption of the lemma, the MPC problem \eqref{eq:ftocp} is feasible at time $t_i = iT$. Now consider the optimal MPC solution~\eqref{eq:mpcOpt} at time $t_i = iT$. By definition, we have that $x(iT) - z_{i|i}^{*,d} \in \mathcal{C}$, which implies that $||x(iT) - z_{i|i}^{*,d}|| \leq c$. Furthermore, by feasibility of the optimal MPC solution~\eqref{eq:mpcOpt} for problem~\eqref{eq:ftocp}, we have that 
$||z_{i+1|i}^{*,d}-z_{i|i}^{*,d}|| \leq d-c$. This implies that 
\begin{equation*}
\begin{aligned}
    ||x(iT) - z_{i+1|i}^{*,d}|| &= ||x(iT) -z_{i|i}^{*,d}+z_{i|i}^{*,d}- z_{i+1|i}^{*,d}|| \\
    &\leq ||x(iT) -z_{i|i}^{*,d}||+||z_{i|i}^{*,d}- z_{i+1|i}^{*,d}||\leq d
\end{aligned}
\end{equation*}
Finally, from~\eqref{eq:relDisCon} we have that $z_{i+1|i}^{*,d} = z^-((i+1)T)$. Thus, from the above equation we conclude that $||x(iT) - z^-((i+1)T)|| \leq d$, which implies that $x(iT) \in \mathcal D_{i+1}$.
\end{proof}

\subsection{MPC Recursive Feasibility and Closed-Loop Constraint Satisfaction}
So far, we have shown that feasibility of the MPC guarantees that $x(iT)\in \mathcal D_{i+1}$, which, under the low-level control policy \eqref{eq:low_level_policy}, guarantees that $x((i+1)T)\in \mathcal C_{i+1}$. Thus, what is remaining to be shown is that the MPC \eqref{eq:ftocp} is recursively feasible, i.e., if \eqref{eq:ftocp} is feasible at $t = 0T$, then it is feasible at $t = iT$ for all $i\in \mathbb N$. This would guarantee that $x(iT)\in \mathcal C_i$ (and hence, $x(iT)\in \mathcal X_T$) for all $i\in \mathbb N$.

\noindent \textbf{Recursive feasibility of MPC:} We make the following assumption for the high-level planner that would help guarantee recursive feasibility of the MPC \eqref{eq:ftocp}.

\begin{assumption}\label{ass:invariance}
The set $\mathcal{X}_F$ is invariant for the autonomous discrete time model $z^d((i+1)T) =\bar A z^d(iT)$ for all $i\in \mathbb{N}$. Furthermore, for all $i\in \mathbb N$, it holds that $||z^d(iT)-\bar Az^d(iT)|| \leq d - c$.
\end{assumption}

\begin{remark}
The above assumption is standard in the MPC literature~\cite{borrelli2017predictive, kouvaritakis2016model} and it allows us to guarantee that the MPC problem is feasible at all time instances. In practice, the set $\mathcal{X}_F$ can be chosen as a small neighborhood of the origin.
% and the region of attraction of the MPC policy are then defined over a superset of $\mathcal{X}_F$, which is given by all the states that can be steered to the set $\mathcal{X}_F$ while satisfying the state and the input constraints.
\end{remark}

Now we are ready to state the main result that shows that the hierarchical control strategy in Section \ref{sec:multi-rate strategy} leads to satisfaction of \eqref{eq:lowLevelCnstr}.
\begin{theorem}\label{th:main_result}
Let Assumptions~\ref{assum: QP feas bd}-\ref{ass:invariance} hold and consider the closed-loop system~\eqref{eq:closed_loop_system} under the control policy~\eqref{eq:policy}, where $\pi_m$ is defined in \eqref{eq:midLevPolicy} and $\pi_l$ is defined in \eqref{eq:low_level_policy}. If at time $t=0$ problem~\eqref{eq:ftocp} is feasible, then 
% , i.e., 
the closed-loop trajectories under the control policy \eqref{eq:policy} satisfy \eqref{eq:lowLevelCnstr}, i.e., the set $\mathcal X_T$ is periodically safe w.r.t. the set $\mathcal X$ with period $T$.
% state and input constraints~\eqref{eq:lowLevelCnstr} are satisfied for all time $t \in \Rp$.
\end{theorem}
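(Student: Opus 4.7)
The plan is to argue by induction on $i\in\Zp$ that the MPC problem \eqref{eq:ftocp} is recursively feasible, and then to read off periodic safety from Lemmas \ref{lemma: DoA rM} and \ref{prop:mpcFeasibilityIMplications} combined with the state constraints of \eqref{eq:ftocp}. The base case is the hypothesis of the theorem. For the inductive step, I assume feasibility at $t_i=iT$ with optimal primal variables \eqref{eq:mpcOpt} and exhibit a feasible warm start for \eqref{eq:ftocp} at $t_{i+1}=(i+1)T$.

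The natural candidate is the shift-and-extend sequence
\begin{align*}
    z^d_{k|i+1} &= z^{d,*}_{k|i}, \quad k=i+1,\dots,i+N, \\
    v^d_{k|i+1} &= v^{d,*}_{k|i}, \quad k=i+1,\dots,i+N-1, \\
    z^d_{i+N+1|i+1} &= \bar A\, z^{d,*}_{i+N|i}, \qquad v^d_{i+N|i+1} = 0.
\end{align*}
Constraints at the shifted indices inherit feasibility from the previous optimal solution. At the appended terminal index, Assumption \ref{ass:invariance} discharges both the state constraint $z^d_{i+N+1|i+1}\in\mathcal X_F\subset\mathcal X_T\ominus\mathcal C$ (via invariance of $\mathcal X_F$ under $\bar A$) and the rate constraint \eqref{eq: zi zi+1 close} (via the $d-c$ bound on $\|z-\bar A z\|$). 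The initial-condition constraint \eqref{eq: x z close} at step $i+1$ reduces to $x((i+1)T)-z^{d,*}_{i+1|i}\in\mathcal C$; recalling $z^{d,*}_{i+1|i}=z^-((i+1)T)$ from \eqref{eq:relDisCon}, this is exactly the containment $x^-((i+1)T)\in\mathcal C_{i+1}$, which follows by applying Lemma \ref{lemma: DoA rM} on $\mathcal T_{i+1}$ after Lemma \ref{prop:mpcFeasibilityIMplications} transfers the inductive hypothesis into $x(iT)\in\mathcal D_{i+1}$.

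Given recursive feasibility, periodic safety falls out of standard Minkowski-sum containments. For every $i\in\Zp$, the MPC state and initial-condition constraints at time $iT$ yield $z^d_{i|i}\in\mathcal X_T\ominus\mathcal C$ and $x(iT)-z^d_{i|i}\in\mathcal C$, hence $x(iT)\in(\mathcal X_T\ominus\mathcal C)\oplus\mathcal C\subseteq\mathcal X_T$. For $t\in\mathcal T_{i+1}$, Lemma \ref{lemma: DoA rM} chained with Lemma \ref{prop:mpcFeasibilityIMplications} gives $x(t)\in\mathcal D_{i+1}=z^-((i+1)T)\oplus\mathcal D$; since $z^-((i+1)T)=z^{d,*}_{i+1|i}\in\mathcal X_T\ominus\mathcal C\subset\mathcal X\ominus\mathcal D$ (using $\mathcal X_T\subset\mathcal X\ominus\mathcal D$ from the problem setup), Minkowski-sum containment delivers $\mathcal D_{i+1}\subset\mathcal X$, establishing \eqref{eq:lowLevelCnstr}.

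The main obstacle I anticipate is the construction and verification of the warm-start MPC sequence at the appended terminal index, where Assumption \ref{ass:invariance} is pressed into service to provide invariance of $\mathcal X_F$, the rate bound, and the state constraint simultaneously; the choice $v^d_{i+N|i+1}=0$ tacitly requires $0\in\mathcal U_m$, which is consistent with the autonomous LTI model appearing in Assumption \ref{ass:invariance}. Once this terminal bookkeeping is settled, the rest of the argument is a clean chaining of Lemmas \ref{lemma: DoA rM}--\ref{prop:mpcFeasibilityIMplications} through the induction, so the proof should be short.
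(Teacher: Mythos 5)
Your proposal is correct and follows essentially the same route as the paper's proof: induction on MPC feasibility via the standard shift-and-extend warm start (using Assumption~\ref{ass:invariance} for the appended terminal state and input), Lemma~\ref{prop:mpcFeasibilityIMplications} and Lemma~\ref{lemma: DoA rM} chained to verify the initial-condition constraint \eqref{eq: x z close} at the next time step, and Minkowski-sum containments to read off $x(iT)\in\mathcal X_T$ and $\mathcal D_{i+1}\subset\mathcal X$. The only thing the paper adds explicitly is the observation that $u(t)=u_l(t)+u_m(t)\in\mathcal U$, which in your version is implicit in the feasibility of the QP~\eqref{QP gen}.
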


\begin{proof}
The proof proceeds by induction.
First, we show that if at time $t_i = iT$ the MPC problem~\eqref{eq:ftocp} is feasible, then at time $t_{i+1}=(i+1)T$ the MPC problem~\eqref{eq:ftocp} is feasible. Let 
\begin{equation*}
    [z_{i|i}^{d,*}, z_{i+1|i}^{d,*},\ldots,z_{i+N|i}^{d,*}] \text{ and } [u_{i|i}^{d,*},\ldots,u_{i+N-1|i}^{d,*}]
\end{equation*}
be the optimal state input sequence to the MPC problem~\eqref{eq:ftocp} at time $t_i = iT$. Then from the feasibility of the MPC problem and Proposition~\ref{prop:mpcFeasibilityIMplications} we have that $x(iT) \in \mathcal D_{i+1}$, which together with Lemma \ref{lemma: DoA rM} implies that
\begin{equation*}
    x((i+1)T) \in \mathcal{C}_{i+1} =\{x ~| ~ || x-z^-((i+1)T)||\leq c\}.
\end{equation*}
Now notice that from equation~\eqref{eq:relDisCon} we have $z_{i+1|i}^{*,d} = z^-((i+1)T)$, which in turn implies that 
\begin{equation}\label{eq:discContConn}
    x((i+1)T)- z_{i+1|i}^{d,*} = x((i+1)T)- z^-((i+1)T) \in \mathcal{C}
\end{equation}
and therefore, by Assumption~\ref{ass:invariance}, the following sequences of states and inputs
\begin{equation}\label{eq:feasTr}
   \hspace{-5pt} [z_{i+1|i}^{d,*},\ldots,z_{i+N|i}^{d,*}, \bar A z_{i+N|i}^{d,*}],\;  [u_{i+1|i}^{d,*},\ldots,u_{i+N-1|i}^{d,*}, 0]
\end{equation}
are feasible at time $t_{i+1}=(i+1)T$ for the MPC problem~\eqref{eq:ftocp}. We have shown that if the MPC problem~\eqref{eq:ftocp} is feasible at time $t_i = iT$, then the MPC problem is feasible at time $t_{i+1}=(i+1)T$. Per assumption of the theorem, problem~\eqref{eq:ftocp} is feasible at time $t_0 =0$, and hence, we conclude by induction that the MPC problem~\eqref{eq:ftocp} is feasible for all $t_i = iT$ and for all $i \in \Zp$.

Next, we show that the feasibility of the MPC problem implies that state and input constraints are satisfied for the closed-loop system. Notice that by definition $u_m(t) = v_{i|i}^* \in \mathcal{U}_m $ for all $t \in [iT, (i+1)T)$ and from Lemma \ref{lemma: DoA rM}, we have the low-level controller returns a feasible control action $u_l(t)$, therefore we have that 
\begin{equation}
    u(t) = u_l(t) + u_m(t) \in \mathcal{U}, \forall t \in \Rp.
\end{equation}
Finally, from the feasibility of the state-input sequences in~\eqref{eq:feasTr} for the MPC problem~\eqref{eq:ftocp}, we have that 
\begin{equation}\label{eq:feasSolCnstr}
    \begin{aligned}
    & x_{i|i}^{d,*} \in \mathcal{X}_T \ominus \mathcal{C} \text{ and } x(iT) - x_{i|i}^{d,*} \in \mathcal{C},~\forall i\in\Zp.
    \end{aligned}
\end{equation}
From the above equation we conclude that $x(iT) \in \mathcal{X}_T$ for all $i\in\Zp$. Note that since $z^-(iT)\in \mathcal X_T\ominus\mathcal C$, $\mathcal C\subset \mathcal D$, and $\mathcal X_T = \mathcal X\ominus\mathcal D$, it follows that $\mathcal D_i = \{z^-(iT)\}\oplus \mathcal D\subset \mathcal X$ for all $i\in \mathbb N$. From \cite[Theorem 1]{garg2021FxTSDomain}, the set $\mathcal D_i$ is forward-invariant for the closed-loop trajectories $x(t)$, i.e., $x(t)\in \mathcal D_i$ for $t\in [(i-1)T, iT)$ for all $i\in \mathbb N$. Hence, it follows that $x(t)\in \mathcal X$ for all $t\geq 0$. Thus, the closed-loop trajectories under the control policy \eqref{eq:policy} satisfy \eqref{eq:lowLevelCnstr}, i.e., the set $\mathcal X_T$ is periodically safe w.r.t. the set $\mathcal X$ with period $T$.
\end{proof}

\section{Simulations}
In this section, we present a simulation case study where we use the proposed strategy to steer a Segway to the origin\footnote{\scriptsize{Code available at
\href{https://github.com/kunalgarg42/fxts_multi_rate}{\texttt{https://github.com/kunalgarg42/fxts\_multi\_rate}}
}}.
The state of the system are the position $p$, the velocity $v$, the rod angle $\theta$ and the angular velocity $\omega$ (see Figure \ref{fig:3 plots}). The control action is the voltage commanded to the motor and the equations of motion used to simulate the system can be found in~\cite[Section~IV.B]{gurriet2018towards}. 
% The nominal model is obtained using a small angle approximation and the MPC is implemented for $Q = \text{diag}(10,10^{-3},10^{-5},10^{-5})$, $R =10^{-3}1$, $Q_F = \text{diag}(10^3,10^3,10^3,2 10^3)$, $\mathcal{U} = \{v ~|~ ||v||_\infty \leq 25\},$ $ \mathcal U_M = \{v ~ | ~ ||v||_\infty \leq 15\}$ and $K = [0,	-7.3989,	-10.435,	-3.7039]$. We choose the set $\mathcal X_T = \{x = [p_x, v_x, \theta, \omega]^T\; |\; |p_x|\leq 10, |v_x|\leq 5, |\theta|\leq 0.3, |\omega|\leq 10\pi\}, \mathcal X_F = \{0\}$, and the parameters $d = 0.2$ and $c = 0.005$. 
% Finally, we implemented the QP \eqref{QP gen} for $\mathcal{S}_e = \{e \in \mathbb{R}^n: e^\top Q_ee \leq 1 \}$ with $Q_e = \text{diag}(1/0.2^2, 1/0.1^2,1/0.05^2,1/0.01^2)$, $\mathcal{X}_c = \mathbb{R}^n$ and $\mathcal{U} = \mathbb{R}$.
In this simulation, we run the high-level MPC planner at $5$Hz and the low-level controller at $10$kHz with parameters $d = 0.6$ and $c = 0.005$. We choose the set $\mathcal X_T = \{x = [p, v, \theta, \omega]^T\; |\; |p|\leq 10, |v|\leq 5, |\theta|\leq 0.3, |\omega|\leq 10\pi\}, \mathcal X_F = \{0\}$, input bounds $\|u\|\leq 25$ with $\|u_m\|\leq 15$. In the first scenario, the initial conditions are $[-1.0 \; 0 \; 0.1 \; 0.3]^T$ and $T = 0.2$. Figure \ref{fig:3 plots} shows the evolution of the FxT barrier functions $h_i$ and the control input $u$. It can be seen that the input constraints are always satisfied, and the FxT barrier functions reach zero at each time step $T$, leading to periodic safety of the underlying set $\mathcal X$. 

\begin{figure}[b]
    \centering
        \includegraphics[width=1\columnwidth,clip]{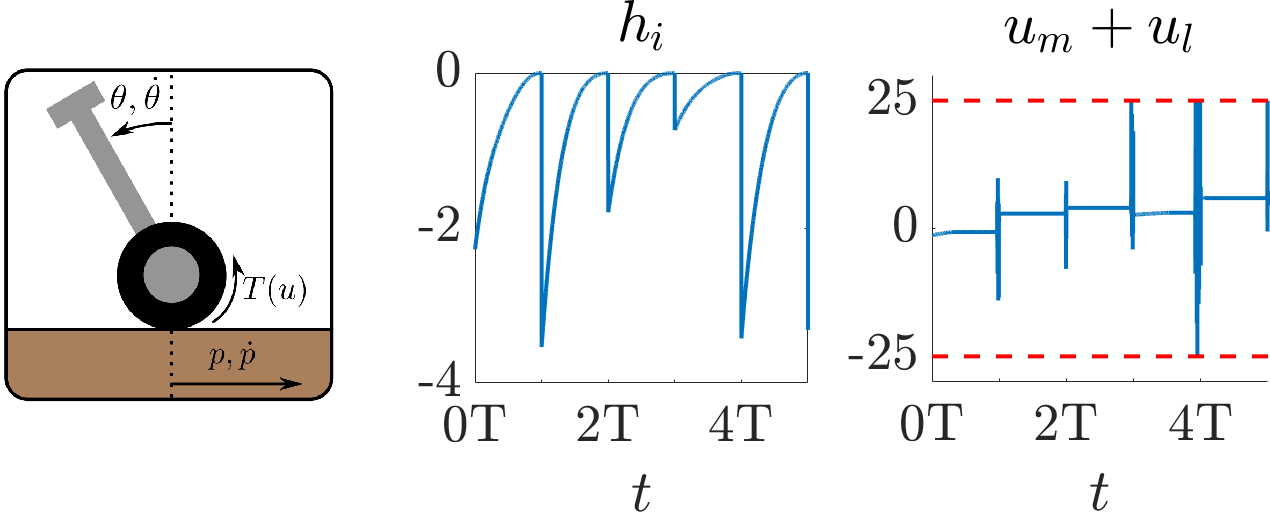}
    \caption{Plots that depict the system and the values of the barrier function $h_i$ and total input $u_m+u_l$.}\label{fig:3 plots}
\end{figure}

\begin{figure*}[!ht]
\centering
\includegraphics[width=\linewidth]{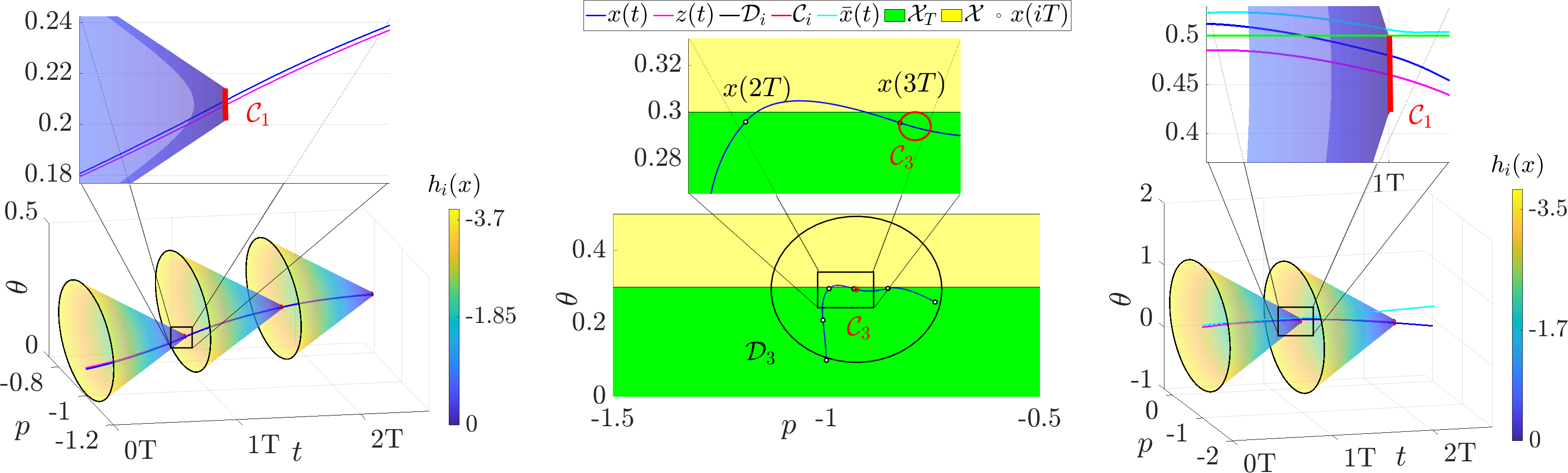}
\caption{Simulation results demonstrating the proposed method. \textbf{(Left)} The trajectory $x(t)$ projected onto $p-\theta-t$ axes. In each interval $[iT, (i+1)T)$, FxT-DoA $\mathcal{D}_i$ is depicted as the colored region which decays to the set $\mathcal C_i$ by the end of the interval. \textbf{(Middle)} Projection of the closed-loop trajectory on the $p-\theta$ plane. The trajectory leaves and enters the set $\mathcal X_T$ after $t = 2T$ and before $t = 3T$, respectively. \textbf{(Right)} Illustration of a scenario where the trajectory $\bar x(t)$ generated using an exponentially stabilizing CLF fails to enter $\mathcal{C}_1$, leading to infeasibility of the MPC at $t = 1T$, whereas the trajectory $x(t)$ generated by the proposed method enters $\mathcal{C}_1$ before $t = 1T$.}
\end{figure*}

% With  $r^\star<\bar r$, $\mathcal D_i$ is a FxT-DoA for the set $\mathcal C_i$ as shown in the left plot of Figure 3, 
Periodic safety is also evident from the left figure in Figure 3, where the closed-loop trajectories are shown to converge to the set $\mathcal C_i$ by end of each interval $\mathcal T_i$. The middle plot in Figure 3 shows the projection of the closed-loop trajectories on the $p-\theta$ plane. It can be seen from the inset plot that the closed-loop trajectory leaves the set $\mathcal X_T$ in the interval $\mathcal T_3$ and returns to the set before the next time step. It can be observed (as discussed in Section \ref{sec:multi-rate strategy}) that the sets $\mathcal C_i, \mathcal D_i$ satisfy $\mathcal C_i\subset\mathcal X_T$ and $\mathcal D_i\subset\mathcal X$, respectively, guaranteeing $x(iT)\in \mathcal X_T$ and $x(t)\in \mathcal X$, i.e., periodic safety of the set $\mathcal X_T$ w.r.t. the set $\mathcal X$.

To compare the performance of the fixed-time stabilizing controller with an exponentially stabilizing one at the low level, we performed a simulation with initial conditions very close to the boundary of the set $\mathcal X_T$. In this case, we chose $\mathcal X_T = \{x = [p, v, \theta, \omega]^T\; |\; |p|\leq 10, |v|\leq 5, |\theta|\leq 0.5, |\omega|\leq 10\pi\}$ and initialized the system with $\theta(0) = 0.495$. In this case, the parameters are chosen as $d = 1, c = 0.04$ and $T = 0.25$. The right plot on Figure 3 shows the trajectory $x(t)$ generated by the proposed controller, and the trajectory $\bar x(t)$ generated by an exponentially stabilizing controller \cite{ames2017control}. The inset plot on the right plot of Figure 3 shows that both $x(t)$ and $\bar x(t)$ leave the set $\mathcal X_T$. The closed-loop trajectory $x(t)$ returns back to the set $\mathcal X_T$ before $t = 1T$, while $\bar x(t)$ fails to do so, leading to infeasibility of the MPC in \eqref{eq:ftocp} at $t = 1T$. This demonstrates the efficacy of the proposed framework over the existing methods that use exponentially stabilizing controllers. 

\section{Conclusions}
In this paper, we introduced the notion of periodic safety requiring system trajectories to visit a subset of a safe set periodically. We defined the notion of fixed-time barrier function and used it in a multi-rate control framework, with MPC as a high-level planner, for control synthesis. We demonstrated that the proposed framework is capable of solving corner cases where exponentially stabilizing controllers might fail. Future work includes studying the robustness properties of the proposed framework by considering model uncertainties.

\bibliographystyle{IEEEtran}
\bibliography{taylor_main,myreferences}

\end{document}